\newcommand{\Fig}[1]{Fig.~\ref{#1}}
\newtheorem{lemma}{Lemma}
\newtheorem{theorem}{Theorem}
\newtheorem{corollary}{Corollary}
\newtheorem{remark}{Remark}
\newtheorem{example}{Example}
\newcommand{\F}{\mathbb{F}}
\begin{document}

\sloppy

\title{Upper Bounds on the Minimum Distance of Quasi-Cyclic LDPC codes Revisited}

\author{
  \IEEEauthorblockN{Alexey Frolov and Pavel Rybin}
	
  \IEEEauthorblockA{\small Inst. for Information Transmission Problems\\
    Russian Academy of Sciences\\Moscow, Russia\\
    Email: \{alexey.frolov, prybin\}@iitp.ru
  }
}




\maketitle
\begin{abstract}
Two upper bounds on the minimum distance of \mbox{type-$1$} quasi-cyclic low-density parity-check (QC LDPC) codes are derived. The necessary condition is given for the minimum code distance of such codes to grow linearly with the code length.
\end{abstract}

\section{Introduction}
In this paper we investigate the minimum code distance of QC LDPC codes \cite{Ta1999, MD, F}. These codes form an important subclass of LDPC codes \cite{G, Ta1981}. These codes also are a subclass of protograph-based LDPC codes \cite{Th}. QC LDPC codes can be easily stored as their parity-check matrices can be easily described. Besides such codes have efficient encoding \cite{CZLF} and decoding \cite{KFL} algorithms. All of these makes the codes very popular in practical applications. 

In \cite{MD} an upper bound on the minimum distance of QC LDPC codes is derived for the case when the base matrix has all the elements equal to one. In this case the minimum code distance is upper bounded by a quantity $(m+1)!$, where $m$ is a height of a base matrix and at the same time (due to the structure of the base matrix) the number of ones in a column of the base matrix. In \cite{SV} the results of \cite{MD} are generalized for the case of type-$w$ QC LDPC codes (see Theorems 7 and 8 in \cite{SV}). Unfortunately these estimates can be applied only to a certain parity-check matrix. In this paper we obtain the upper bounds which are valid for any code from the ensemble of QC LDPC codes with the given degree distribution. This allows us to formulate the necessary condition for the minimum code distance of such codes to grow linearly with the code length. We consider only the case of so-called \mbox{type-$1$} QC LDPC codes. 

Our contribution is as follows. Two upper bounds on the minimum distance of \mbox{type-$1$} quasi-cyclic low-density parity-check (QC LDPC) codes are derived. The necessary condition is given for the minimum code distance of such codes to grow linearly with the code length.

The structure of the paper is as follows. In section~\ref{sect2} the preliminaries on QC LDPC codes are given. In section~\ref{sect3} the bounds are derived and analyzed.

\section{Preliminaries}\label{sect2}
In this paper we only consider binary codes. Let $w$ be some positive integer. Consider a matrix of size $m \times n$
\[
\mathbf{H}^{(W)} = \left[ h_{i,j}\right] \in \{0,1, \ldots, w\} ^ {m \times n}.
\]
In what follows the matrix will be referred to as the weight matrix\footnote{in the literature the matrix is called a base matrix or a proto-matrix.}.

Let us construct a parity-check matrix $\mathbf{H}$ of the QC LDPC code $\mathcal C$. For this purpose we extend the matrix $\mathbf{H}^{(W)}$ with circulant matrices (circulants) as follows: 
\begin{equation*}
\mathbf{H} = 
\left[
\begin{array}{cccc}
\mathbf{P}_{1,1}  & \mathbf{P}_{1,2}  & \cdots & \mathbf{P}_{2,n}      \\
\mathbf{P}_{2,1}  & \mathbf{P}_{2,1}  & \cdots & \mathbf{P}_{2,n}      \\
\vdots                 & \vdots                 & \ddots & \vdots       \\
\mathbf{P}_{m,1} & \mathbf{P}_{m,1} & \cdots & \mathbf{P}_{m,n}   
\end{array}
\right] \in \F_2^{ms \times ns},
\end{equation*}
where $\mathbf{P}_{i,j}$ is a circulant over a binary field $\F_2$ of size $s \times s$ ($s \geq w$) and of weight\footnote{the weight of a circulant is a weight of its first row.} $h_{i,j}$, $i = 1, \ldots, m$; $j = 1, \ldots, n$.

Let us denote the length of the code $\mathcal C$ by $N = ns$, such inequality follows for the rate of the code
$$
R(\mathcal C) \geq1-\frac{m}{n}.
$$

\begin{remark}
It is easy to see that the obtained code is in fact quasi-cyclic. Consider the codeword ${\bf c} \in \mathcal C$. Let us split the codeword into $n$ subblocks in accordance to the structure of the parity-check matrix $\mathbf{H}$\textup:
\[
{\bf c} = \left( {\bf c}_1, {\bf c}_2, \ldots, {\bf c}_n \right),
\] 
then note that if we apply the same cyclic shifts in each subblock we again obtain a codeword of $\mathcal C$.
\end{remark}

\begin{remark}
The constructed code is a \mbox{type-$w$} QC LDPC code. In what follows we will only consider \mbox{type-$1$} QC LDPC codes, i.e. $w = 1$. In this case the matrix $\mathbf{H}^{(W)}$ can be considered as a matrix over $\F_2$.
\end{remark}

Let $\F$ be some field, by $\F[x]$ we denote the ring of all the polynomials with coefficients in $\F$. It is well-known that the ring of circulants of size $s \times s$ over $\F$ is isomorphic to the factor ring $\F^{(s)}[x] = \F[x]/\left( x^s-1 \right) $. Thus with the parity-check matrix $\mathbf{H}$ we associate a polynomial parity-check matrix $\mathbf{H}(x) \in \left( \F_2^{( s )}[x] \right)^{m \times n}$:
\begin{equation*}
\mathbf{H}(x) = 
\left[
\begin{array}{cccc}
p_{1,1}(x)  & p_{1,2}(x)  & \cdots & p_{1,n}(x)      \\
p_{2,1}(x)  & p_{2,1}(x)  & \cdots & p_{2,n}(x)      \\
\vdots      & \vdots      & \ddots & \vdots          \\
p_{m,1}(x)  & p_{m,1}(x)  & \cdots & p_{m,n}(x)  
\end{array}
\right],
\end{equation*}
where $p_{i,j}(x) = \sum\nolimits_{t=1}^{s} {P_{i,j}(t,1)x^{t-1}}$, by $P_{i,j}(t,1)$ we mean an element at the intersection of the $t$-th row and the first column in the matrix $P_{i,j}$. 

\begin{example}
Matrices
\[
\mathbf{H}^{(W)} = 
\left[
\begin{array}{ccc}
0 & 1 & 1    \\
1 & 0 & 1   \\
\end{array}
\right]
\]
и
\[
\mathbf{H}(x) = 
\left[
\begin{array}{ccc}
0  & x^2  & x    \\
1 & 0 & x^2   \\
\end{array}
\right].
\]
correspond to the parity-check matrix
\[
\mathbf{H} = 
\left[
\begin{array}{c|c|c}
\begin{array}{ccc} 0 & 0 & 0 \\ 0 & 0 & 0  \\ 0 & 0 & 0 \end{array}  & \begin{array}{ccc} 0 & 1 & 0 \\ 0 & 0 & 1  \\ 1 & 0 & 0 \end{array}  & \begin{array}{ccc} 0 & 0 & 1 \\ 1 & 0 & 0  \\ 0 & 1 & 0 \end{array}     \\
\noalign{\vskip 0.1cm} 
\hline 
\noalign{\vskip 0.1cm} 
\begin{array}{ccc} 1 & 0 & 0 \\ 0 & 1 & 0  \\ 0 & 0 & 1 \end{array} & \begin{array}{ccc} 0 & 0 & 0 \\ 0 & 0 & 0  \\ 0 & 0 & 0 \end{array}  & \begin{array}{ccc} 0 & 1 & 0 \\ 0 & 0 & 1  \\ 1 & 0 & 0 \end{array}    \\
\end{array}
\right]
\]
\end{example}

\begin{remark}[Connection to protograph-based LDPC codes]
QC LDPC codes is a subclass of protograph-based LDPC codes. In this case $\mathbf{H}^{(W)}$ is the adjacency matrix of a protograph and permutation matrices can only be chosen from circulants.
\end{remark}

Let us associate the vector
$$
{\bf c} = ( {\bf c}_{1}, {\bf c}_{2}, \ldots, {\bf c}_{n} ),
$$ 
where 
$$
{\bf c}_{i} = (c_{i,1}, c_{i,2}, \ldots, c_{i,s}), \quad i = 1,2, \ldots, n,
$$
to the vector of polynomials
$$
{\bf c}(x) = ( c_{1}(x), c_{2}(x), \ldots, c_{n}(x) ),
$$
where $c_{i}(x) = \sum\nolimits_{t=1}^{s} {c_{i,t} x^{t-1}}$.

Clear, that
\[
{\bf H} {\bf c}^T = {\bf 0} \quad (\text{in the filed} \:\:\: \F_2) 
\]
is equivalent to
\[
{\bf H}(x) {\bf c}^T(x) = {\bf 0} \quad (\text{in the ring} \:\:\: \F^{(s)}_2[x]).
\]

By the weight of polynomial $f(x)$ we mean the number of non-zero coefficients. We denote the weight by $||f(x)||$. Let us define the weight of the vector of polynomials ${\bf c}(x) = ( c_{1}(x), c_{2}(x), \ldots, c_{n}(x) )$ as follows
$$
||{\bf c}(x)|| = \sum\limits_{i=1}^n {||c_i(x)||}.
$$

\section{Minimum code distance}\label{sect3}

Let us denote the minimum code distance of the code $\mathcal C$ by $D(\mathcal C)$. First we derive a simple bound.
\begin{theorem}
Let $\mathcal C$ be a \mbox{type-$1$} QC LDPC code with the weight matrix $\mathbf{H}^{(W)}$ and let $d$ be a minimal code distance of the code which corresponds to the parity-check matrix $\mathbf{H}^{(W)}$, then
\begin{equation}\label{est_1}
D(\mathcal C) \leq ds.
\end{equation}
\end{theorem}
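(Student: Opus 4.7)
The plan is to produce an explicit codeword of $\mathcal{C}$ of weight at most $ds$ by \emph{lifting} a minimum-weight codeword of the small code defined by $\mathbf{H}^{(W)}$.

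First I would let $\mathbf{u}=(u_1,\ldots,u_n)\in\mathbb{F}_2^n$ be a codeword of weight $d$ with $\mathbf{H}^{(W)}\mathbf{u}^T=\mathbf{0}$; such $\mathbf{u}$ exists by the definition of $d$. Since the code is type-$1$, every block $\mathbf{P}_{i,j}$ is either the zero matrix or a single cyclic permutation matrix, i.e. it has weight $h_{i,j}\in\{0,1\}$. I would then define the candidate codeword by replicating the all-ones row vector $\mathbf{1}_s=(1,1,\ldots,1)$ in each block whose coefficient $u_j$ is $1$:
\[
\mathbf{c}=(u_1\mathbf{1}_s,\,u_2\mathbf{1}_s,\,\ldots,\,u_n\mathbf{1}_s)\in\mathbb{F}_2^{ns}.
\]

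The key observation is that for any cyclic permutation matrix $\mathbf{P}$ of size $s\times s$ one has $\mathbf{P}\,\mathbf{1}_s^T=\mathbf{1}_s^T$, because a permutation of a constant vector leaves it unchanged. Consequently, the $i$-th block of $\mathbf{H}\mathbf{c}^T$ becomes
\[
\sum_{j=1}^{n} u_j\,\mathbf{P}_{i,j}\,\mathbf{1}_s^T \;=\; \sum_{j=1}^{n} u_j h_{i,j}\,\mathbf{1}_s^T \;=\; \Bigl(\sum_{j=1}^{n} h_{i,j} u_j\Bigr)\mathbf{1}_s^T,
\]
where in the second equality a block of weight $0$ annihilates $\mathbf{1}_s$ and a block of weight $1$ reproduces it. The scalar $\sum_j h_{i,j}u_j$ equals the $i$-th entry of $\mathbf{H}^{(W)}\mathbf{u}^T$, which vanishes over $\mathbb{F}_2$ by the choice of $\mathbf{u}$. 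Hence $\mathbf{H}\mathbf{c}^T=\mathbf{0}$ and $\mathbf{c}\in\mathcal{C}$.

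Finally, the Hamming weight of $\mathbf{c}$ is exactly $d\cdot s$, since precisely $d$ of the $n$ subblocks $u_j\mathbf{1}_s$ are nonzero and each contributes $s$ ones. Therefore $D(\mathcal{C})\le ds$, proving~\eqref{est_1}. I do not anticipate a real obstacle here: the entire argument hinges on the fact that $\mathbf{1}_s$ is a fixed vector of every permutation circulant, which is why the bound is phrased for \emph{type-$1$} codes; any attempt to extend this direct lift to type-$w$ with $w>1$ would fail precisely because weight-$w$ circulants do not preserve $\mathbf{1}_s$ modulo~$2$ when $w$ is even.
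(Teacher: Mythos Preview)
Your proof is correct and is essentially the same as the paper's: the paper works in the polynomial representation, setting $c_j(x)=f(x)=\sum_{t=0}^{s-1}x^t$ on the support of a minimum-weight codeword $c_W$ of the small code and using $x^j f(x)=f(x)$ in $\F_2^{(s)}[x]$, which is exactly your observation $\mathbf{P}\,\mathbf{1}_s^T=\mathbf{1}_s^T$ written in the circulant picture. The argument, the constructed codeword, and the resulting weight $ds$ coincide.
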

\begin{proof}
Let $c_W$ be a codeword of weight $d$ of the code with the parity-check matrix $\mathbf{H}^{(W)}$, $S = \operatorname{supp}(c_W)$ and let $ f(x) = \sum\nolimits_{j=0}^{s-1}{x^j}$. Let us construct a codeword ${\bf c}(x) \in \mathcal C$. For $i = 1, \ldots, n$
\[
c_i (x)  =  \left\{ {\begin{array}{rc}
  f(x), & \quad i \in S, \\ 
  0,      & \quad \text{otherwise.} 
\end{array}} \right.
\]
We only need to note that $x^j  f(x)= f(x)  \: \forall j = 0, \ldots, s-1$,
hence
$$
{\bf H}(x) {\bf c}^T(x) = f(x) \mathbf{H}^{(W)} c_W^T = {\bf 0}.
$$
\end{proof}

Let us introduce the notation of a submatrix. Let $\bf A$ be some matrix of size $M \times N$. Let $I \subseteq \{1,2,\ldots,M\}$ be a subset of rows, $J \subseteq \{1,2,\ldots,N\}$ -- subset of columns. By $\mathbf{A}_{I,J}(x)$ we denote a submatrix of $\mathbf{A}$ which contains only rows with numbers in $I$ and only columns with numbers in $J$. If $I = \{1,2,\ldots,M\}$, then we use a notation $\mathbf{A}_{J}(x)$.

To derive the second estimate we start with the Lemma given in \cite{SV}. This lemma is the generalization of Theorem~2 from \cite{MD} and shows how to construct codewords of QC LDPC codes. As in this paper we work with \mbox{type-$1$} QC LDPC codes we formulate the Lemma for such codes only.

\begin{lemma}[Smarandache and Vontobel, \cite{SV}]\label{CW}
Let $\mathcal C$ be a \mbox{type-$1$} QC LDPC code with the polynomial matrix ${\bf H}(x)$. Let $J  \subset \{1, 2, \ldots, n\}$, $|J| = m+1$ and let $\Delta_j(x) = \det \left( \mathbf{H_{J \backslash \{j\} }} (x) \right)$, then a word ${\bf c}(x) = ( c_{1}(x), c_{2}(x), \ldots, c_{n}(x) )$, where
\[
c_j (x)  =  \left\{ {\begin{array}{rc}
  {\Delta_j}(x), & \quad j \in J, \\ 
  0,      & \quad \text{otherwise.} 
\end{array}} \right.
\]
is a codeword of $\mathcal C$.
\end{lemma}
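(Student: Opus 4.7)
The plan is to verify directly that $\mathbf{H}(x)\mathbf{c}^T(x)=\mathbf{0}$ in $\F_2^{(s)}[x]$. Since $c_j(x)=0$ for every $j\notin J$, only the columns indexed by $J$ contribute, so the problem reduces to showing
\[
\mathbf{H}_J(x)\,\mathbf{c}_J^T(x) = \mathbf{0},
\]
where $\mathbf{H}_J(x)$ is the $m\times(m+1)$ submatrix of $\mathbf{H}(x)$ formed by the columns of $J$, and $\mathbf{c}_J(x)$ is the vector whose $j$-th entry is $\Delta_j(x)$. I would fix an arbitrary row index $i\in\{1,\ldots,m\}$ and prove the scalar identity $\sum_{j\in J} h_{i,j}(x)\,\Delta_j(x)=0$ for that row.

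The mechanism I would invoke is the classical ``Laplace expansion of a matrix with a repeated row'' trick. For the fixed $i$, form the $(m+1)\times(m+1)$ matrix $\widetilde{\mathbf{H}}^{(i)}(x)$ whose first row is the $i$-th row of $\mathbf{H}_J(x)$ and whose remaining $m$ rows are exactly the rows of $\mathbf{H}_J(x)$. Because row $1$ is repeated as row $i+1$, one has $\det\widetilde{\mathbf{H}}^{(i)}(x)=0$. Expanding the determinant along the first row yields
\[
0 \;=\; \det\widetilde{\mathbf{H}}^{(i)}(x) \;=\; \sum_{j\in J} \pm\, h_{i,j}(x)\,\det\mathbf{H}_{J\setminus\{j\}}(x) \;=\; \sum_{j\in J} h_{i,j}(x)\,\Delta_j(x),
\]
where the signs vanish because we are in characteristic $2$. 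Running this over every $i$ gives $\mathbf{H}_J(x)\mathbf{c}_J^T(x)=\mathbf{0}$, which is the claim.

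The only genuinely delicate point, and the step I would be most careful about, is that determinants here live in the commutative ring $\F_2^{(s)}[x]=\F_2[x]/(x^s-1)$ rather than a field; in particular this ring has zero divisors, so one cannot appeal to linear algebra over a field. The way I would justify the Laplace expansion is to note that the cofactor expansion formula, together with the vanishing of the determinant of a matrix with two equal rows, is a polynomial identity in the matrix entries valid over any commutative ring. Equivalently, I would first regard $\mathbf{H}(x)$ as a matrix over the polynomial ring $\F_2[x]$, where the identity holds by textbook linear algebra, and only afterwards reduce modulo $x^s-1$ to land in $\F_2^{(s)}[x]$. With this observation the proof is essentially a one-line Laplace expansion, and the remaining steps are just bookkeeping about which submatrices appear as cofactors.
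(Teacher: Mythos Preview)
Your proposal is correct and follows essentially the same approach as the paper: both reduce to showing that each syndrome entry $\sum_{j\in J} p_{i,j}(x)\Delta_j(x)$ is the Laplace expansion of the determinant of an $(m+1)\times(m+1)$ matrix obtained by stacking the $i$-th row of $\mathbf{H}_J(x)$ on top of $\mathbf{H}_J(x)$, which vanishes because of the repeated row. Your added remarks about signs disappearing in characteristic~$2$ and about the determinant identities holding over any commutative ring (in particular $\F_2[x]/(x^s-1)$) are welcome clarifications that the paper leaves implicit.
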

\begin{proof}
Let us show that ${\bf s}(x) = {\bf H}(x) {\bf c}^T(x) = {\bf 0}$ in the ring $\F_2^{(s)}[x]$. We only give the proof for the first element of the syndrome:
\[
{s_1}(x) = \sum\limits_{j = 1}^n {{p_{1,j}}(x){c_j}(x) = \sum\limits_{j \in J} {{p_{1,j}}(x){\Delta _j}(x)} }.
\]

Let $J = \{j_1, j_2, \ldots, j_{m+1} \}$. Note, that
\begin{eqnarray*}
{s_1}(x)  &=& \det \left[ {\begin{array}{cccc}
  {{p_{1,{j_1}}}}(x)&{{p_{1,{j_2}}}}(x)& \cdots &{{p_{1,{j_{m + 1}}}}}(x) \\ 
  \hline
  {{p_{1,{j_1}}}}(x)&{{p_{1,{j_2}}}}(x)& \cdots &{{p_{1,{j_{m + 1}}}}}(x) \\ 
  {{p_{2,{j_1}}}}(x)&{{p_{2,{j_2}}}}(x)& \cdots &{{p_{2,{j_{m + 1}}}}}(x) \\ 
   \vdots & \vdots & \ddots & \vdots  \\ 
  {{p_{m,{j_1}}}}(x)&{{p_{m,{j_2}}}}(x)& \cdots &{{p_{m,{j_{m + 1}}}}}(x)
\end{array}} \right]\\
&=& 0,
\end{eqnarray*}
as the matrix contains two identical rows. Analogously one can carry out the proof for the rest elements of the syndrome.
\end{proof}

We need to introduce a notation $\overline{l}(t_1, t_2)$. Let us arrange the columns of the matrix ${\bf H}^{(W)}$  in ascending order of their weights (i.e. the first columns are of small weight, the last ones are of large weight). In what follows we assume the columns of the matrix ${\bf H}^{(W)}$ to be in this order.  Let $l_j$ be a weight of the $j$-th column of  ${\bf H}^{(W)}$, $t_2 > t_1$,  then
\[
\overline{l}(t_1, t_2) = \frac{1}{t_2-t_1+1}\sum\limits_{i=t_1}^{t_2} l_i.
\] 
Let $l_\text{max}$ and $l_\text{min}$ be accordingly maximum and minimum column weights in ${\bf H}^{(W)}$ (in our case $l_\text{max} = l_n$ and $l_\text{min} = l_1$). 

\begin{example}
Let us consider the matrix ${\bf H}^{(W)}$ with $n = 48$. Let $\Lambda(x) = 12 x^2 + 24 x^3 + 12 x^4$ be the variable degree \textup(column weight\textup) distribution polynomial for ${\bf H}^{(W)}$ . Recall that
\[
\Lambda(x) = \sum\limits_{i=1}^{l_\text{max}} \Lambda_i x^i,
\]
where $\Lambda_i$ is a number of columns of weight $i$. For more info on degree distribution polynomials see \textup{\cite{UR}}.

The dependency $\overline{l}(2, t)$ for this case is shown in \Fig{av_weight}.

\begin{figure}[t]
\centering
\includegraphics[width=0.48\textwidth]{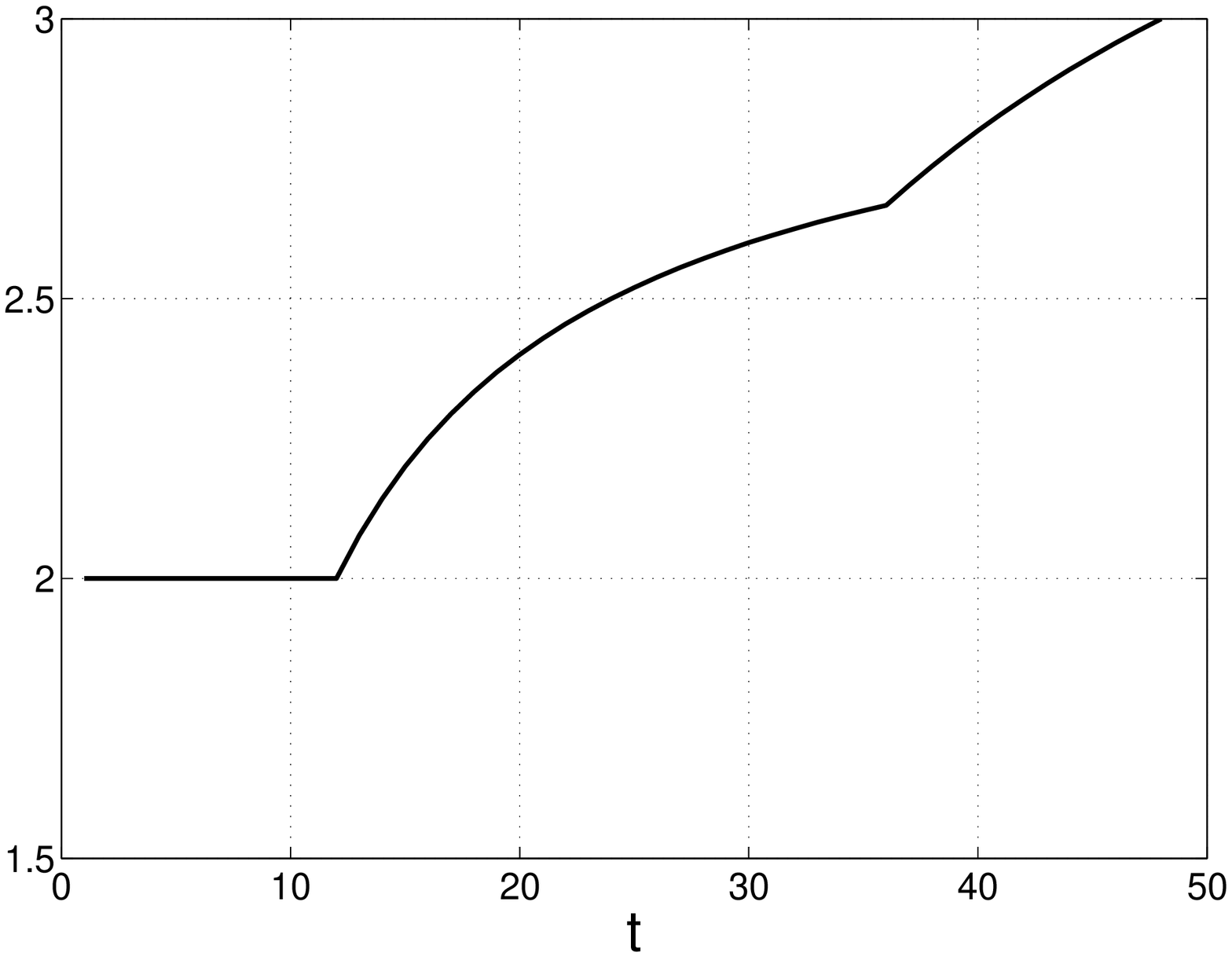}
\caption{The dependency $\overline{l}(2, t)$}
\label{av_weight}
\end{figure}
\end{example}

\begin{theorem}\label{theorem2}
Let $\mathcal C$ be a \mbox{type-$1$} QC LDPC code with the weight matrix ${\bf H}^{(W)}$ of size $m \times n$, let $k$, $1 \leq k \leq m$, be the largest integer for which
$l_{m+2-k} \geq k$ and let $\ell =  \overline{l}(2, m+1-k)$ then
\begin{equation}\label{est_2}
D(\mathcal C) \leq (m+1) k!  \ell^{m-k}.
\end{equation}
\end{theorem}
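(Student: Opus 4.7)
The plan is to apply Lemma~\ref{CW} with $J=\{1,2,\ldots,m+1\}$, the $m+1$ column indices of smallest weight (recall the columns are already sorted in ascending weight order). This produces a nonzero codeword whose weight equals $\sum_{j\in J}\|\Delta_j(x)\|$, so it suffices to show $\|\Delta_j(x)\|\leq k!\,\ell^{m-k}$ for every $j\in J$ and then sum over $J$.

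Since the code is type-$1$, every nonzero entry of ${\bf H}(x)$ is a monomial, and hence each nonvanishing term in the Leibniz expansion of $\Delta_j(x)=\det\mathbf{H}_{J\setminus\{j\}}(x)$ is again a monomial. Consequently $\|\Delta_j(x)\|$ is upper-bounded by the permanent of the $m\times m$ binary matrix $\mathbf{H}^{(W)}_{J\setminus\{j\}}$. I would estimate this permanent by counting contributing permutations column-by-column in ascending order of column weight: at step $i$, with $i-1$ rows already used, the current column supplies at most $\min(c_i,\,m-i+1)$ choices, where $c_i$ is that column's weight. The definition of $k$ (namely $l_{m+2-k}\geq k$ and $l_{m+1-k}\leq k$) splits the columns of $\mathbf{H}^{(W)}_{J\setminus\{j\}}$ into at most $m+1-k$ \emph{light} columns of weight $\leq k$ and the remaining \emph{heavy} columns of weight $\geq k$. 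For a light column $\min(c_i,m-i+1)=c_i\leq k$, while the heavy columns contribute the tail of a factorial, giving at most $k!$ (or $(k-1)!$) in total. AM--GM applied to the surviving light-column weights, drawn from $l_2,\ldots,l_{m+1-k}$, then produces the factor $\ell^{m-k}$.

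The technical point that needs care is making the bound uniform across the three cases $j=1$, $2\leq j\leq m+1-k$, and $j\geq m+2-k$. In the first case the product $\prod_{i=2}^{m+1-k}l_i$ appears directly and AM--GM finishes the job; in the second, removing a light column $l_j$ reintroduces the smaller $l_1$, and the sorting inequality $l_1\leq l_j$ keeps the light-column product from growing; in the third, all $m+1-k$ light columns survive but only $k-1$ heavy columns remain, so the heavy tail contributes only $(k-1)!$ while the extra factor $l_1\leq k$ restores the missing $k$. Each case therefore yields $\|\Delta_j(x)\|\leq k!\,\ell^{m-k}$, and multiplying by $|J|=m+1$ gives the bound~(\ref{est_2}).

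The main obstacle I anticipate is precisely this uniform case analysis: one would like a single clean inequality, but the exchange between ``one more heavy column removed'' and ``one more light column removed'' shifts a factor of $k$ between the factorial and the AM--GM product, which is why $\ell$ is defined on the index range $[2,m+1-k]$ rather than $[1,m+1-k]$ and why the argument must be done case-by-case rather than in one shot.
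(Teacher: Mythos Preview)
Your approach is the same as the paper's: same choice $J=\{1,\ldots,m+1\}$, same invocation of Lemma~\ref{CW}, same permanent-type bound $\prod_{i=1}^m\min\{i,l_{m+2-i}\}=k!\prod_{j=2}^{m+1-k}l_j$ on $\|\Delta_1(x)\|$, and the same AM--GM step. Your three-case analysis for general $j\in J$ is in fact more careful than the paper, which simply asserts that ``similar inequalities hold for all the $\Delta_j(x)$''; the argument you sketch for those cases is correct.

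The genuine gap is elsewhere. You assert that Lemma~\ref{CW} ``produces a nonzero codeword,'' but nothing guarantees this: all the minors $\Delta_j(x)$, $j\in J$, may vanish simultaneously (for instance when $\mathbf{H}_J(x)$ has rank below $m$), and then the lemma returns the zero word, yielding no information about $D(\mathcal C)$. The paper treats this degeneracy explicitly. It locates a nonzero minor of maximal order $r<m$ inside $\mathbf{H}_J(x)$, on row set $I$ and column set $S$, adjoins one further column to form $S'=S\cup\{j\}$ with $j\in J\setminus S$, and applies Lemma~\ref{CW} to the $r\times(r{+}1)$ submatrix $\mathbf{H}_{I,S'}(x)$. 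By maximality of $r$ every larger minor of $\mathbf{H}_J(x)$ vanishes, so the short word thus obtained, padded with zeros outside $S'$, is a codeword of the full code $\mathcal C$; and since the $r\times r$ minor on $S$ is nonzero, at least one coordinate of this codeword is nonzero. The same weight estimate then gives $D(\mathcal C)\leq (r+1)\,k!\,\ell^{m-k}$, which is only stronger than~(\ref{est_2}). So the ``main obstacle'' is not the case split over $j$---that part is routine---but this degeneracy, which you need to add.
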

\begin{proof}
Recall that the columns of the matrix ${\bf H}^{(W)}$  are in ascending order of their weights. Let $J = \{1, 2, \ldots, m+1\}$. Let us construct a codeword ${\bf c}(x)$ in accordance to Lemma~\ref{CW}. The last $n-|J|$ positions ${\bf c}(x)$ are equal to zero. 

Consider $\Delta_1(x)$. Note, that 
$$
||\Delta_1(x)|| \leq \prod\limits_{i = 1}^{m} {\min \{i, l _{m+2-i}}\} = k! \prod\limits_{j = 2}^{m + 1 - k} {{l _j}},
$$ 
where $l _j$ is a weight of the $j$-th column in ${\bf H}^{(W)}_J$. This inequality follows from the fact that the sum for $\Delta_1(x)$ contains at most $k! \prod\nolimits_{j = 2}^{m +1 - k} {{l _j}}$ terms. Each of this terms is a monomial. Since
$$
\prod\limits_{j = 2}^{m + 1 - k} {{l _j}} \leq \ell^{m-k},
$$
then
$$
||\Delta_1(x)|| \leq k!  \ell^{m-k}.
$$
Similar inequalities hold for all the $\Delta_j(x)$, $j \in J$.  As there are at most $m+1$ non-zero positions in a codeword ${\bf c}(x)$, then 
\[
||{\bf c}(x)|| \leq (m+1) k!  \ell^{m-k}.
\]

We should also consider the case when all $\Delta_j(x) = 0 \quad \forall j \in J$. In this case Lemma~\ref{CW} gives a zero codeword. We proceed as follows. We find a non-zero minor of the maximal order $r$, $r < m$ in the matrix $\mathbf{H}_J(x)$. Let $I$ be a set of row numbers, $S$ be a set of column numbers, such that $\mathbf{H}_{I,S}(x)$ is the minor. Let $S' = S \cup j$, $j \in J \backslash S$. Consider the submatrix $\mathbf{H}_{I,S'}(x)$. We construct a codeword for this submatrix in accordance to Lemma~\ref{CW}. Note, that this word contains al least one non-zero position. After appending this word with zeros on positions  $\{1,2\ldots,n\} \backslash S'$, we obtain a codeword for the matrix $\mathbf{H}(x)$, as all the minors of bigger order are equal to zero. In this case we have 
$$
D(\mathcal C) \leq (r+1) k!  \ell^{m-k} < (m+1) k!  \ell^{m-k},
$$
this completes the proof.
\end{proof}

\begin{remark}
Note that the bound is better for regular codes \textup(see \Fig{av_weight}\textup). In this case we have \textup(let $\ell$ be the column weight, it is easy to check, that $k = \ell$\textup)
\[
D(\mathcal C) \leq (m+1) \ell!  \ell^{m-\ell}.
\]
If the base matrix is the all one matrix \textup($\ell = m$\textup), we obtain the bound from \textup{\cite{MD}}.
\end{remark}

\begin{remark}
Note that the estimate \textup{(\ref{est_2})} does not depend on $s$. If $m$ and $n$ are fixed and $s \to \infty$, then in accordance to the estimate \textup{(\ref{est_2})} $D(\mathcal C)$ is upper bounded by a constant. We also note, that in \textup{\cite{ZZ}} it is proved that there exist protograph-based LDPC codes with the following properties: the minimum distance of such codes grows linearly with the code length while the sizes of the base matrix \textup($m$ and $n$\textup) are fixed.
\end{remark}

\begin{corollary}
Thus, for the minimum code distance $D(\mathcal C)$ to grow linearly with the code length $N=ns$ it is necessary, that the estimates \textup{(\ref{est_1})} and \textup{(\ref{est_2})} grow linearly with $N$. 
\end{corollary}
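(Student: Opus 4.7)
The corollary is essentially immediate from Theorem~1 and Theorem~\ref{theorem2}, so the plan is just to chain inequalities contrapositively. I would fix a family of type-$1$ QC LDPC codes $\mathcal{C}$ with $N=ns \to \infty$ (the weight matrix $\mathbf{H}^{(W)}$ and the circulant size $s$ are allowed to vary with $N$). Assume toward the conclusion that $D(\mathcal{C})$ grows linearly in $N$, i.e.\ $D(\mathcal{C}) \geq c N$ for some constant $c>0$ and all $N$ large enough. Applying Theorem~1 yields $c N \leq d s$, and applying Theorem~\ref{theorem2} yields $c N \leq (m+1)\, k!\, \ell^{m-k}$; hence both upper bounds are themselves $\Omega(N)$. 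Taking the contrapositive produces the stated necessary condition.

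No step here is really obstructive; the mathematical content is contained entirely in the two theorems, and the corollary is the logical packaging of those bounds into an asymptotic criterion on the code family. The only observation worth highlighting during the write-up is that bound (\ref{est_2}) does not involve $s$, so the necessary condition effectively forces the structural parameters $m$, $n$, and the column-weight profile (via $k$ and $\ell$) to scale with $N$: if $m$ and $n$ were held fixed while $s \to \infty$, bound (\ref{est_2}) would remain a constant and, via the contrapositive, rule out linear growth of $D(\mathcal{C})$. This is exactly the content of Remark~3, and recording it explicitly inside the proof clarifies that both inequalities (\ref{est_1}) and (\ref{est_2}) are genuinely needed and not redundant.
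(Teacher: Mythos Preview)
Your argument is correct and matches the paper's treatment: the corollary is stated in the paper without any proof, since it follows immediately from the two upper bounds $D(\mathcal C)\leq ds$ and $D(\mathcal C)\leq (m+1)\,k!\,\ell^{m-k}$ by the contrapositive you describe. Your additional remark about estimate~(\ref{est_2}) being independent of $s$ is exactly the content of the paper's preceding Remark, so nothing is missing.
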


\section{Conclusion}
Two upper bounds on the minimum distance of \mbox{type-$1$} quasi-cyclic low-density parity-check (QC LDPC) codes are derived. The necessary condition is given for the minimum code distance of such codes to grow linearly with the code length.

\section*{Acknowledgment}
The authors thank V.V.~Zyablov for the numerous advice and recommendations.

\end{document}